\theoremstyle{plain}
\newtheorem{theorem}{Theorem}
\newtheorem{proposition}[theorem]{Proposition}
\newtheorem{corollary}[theorem]{Corollary}
\theoremstyle{definition}
\newtheorem{definition}{Definition}
\newtheorem{construction}{Construction}
\newtheorem{remark}{Remark}
\def\BibTeX{{\rm B\kern-.05em{\sc i\kern-.025em b}\kern-.08em
    T\kern-.1667em\lower.7ex\hbox{E}\kern-.125emX}}
\begin{document}
\title{ Asymptotically Optimal Repair of Reed-Solomon Codes with Small Sub-Packetization under Rack-Aware Model
\thanks{This work was supported  by Key R\&D Program of Shandong Province, China (No. 2024CXPT066).}
}
 \author{\IEEEauthorblockN{Ke Wang\textsuperscript{*}, Zhongyan Liu\textsuperscript{\dag}, Rengang Li\textsuperscript{*}, Yaqian Zhao\textsuperscript{*}, and Yaqiang Zhang\textsuperscript{*}}
 \IEEEauthorblockA{\textit{\textsuperscript{*}IEIT SYSTEMS Co., Ltd., Jinan, 250101, China
} \\
{\textsuperscript{\dag }\textit{KLMM, Academy of Mathematics and Systems Science, Chinese Academy of Sciences, Beijing 100190, China}}\\
Emails: \{wangke12, lirg, zhaoyaqian, zhangyaqiang\}@ieisystem.com, liuzhongyan@amss.ac.cn}  
}
\maketitle
\begin{abstract}
This paper presents a comprehensive study on the asymptotically optimal repair of Reed-Solomon (RS) codes with small sub-packetization, specifically tailored for rack-aware distributed storage systems. Through the utilization of multi-base expansion, we introduce a novel approach that leverages monomials to construct linear repair schemes for RS codes. Our repair schemes which adapt to  all admissible parameters achieve asymptotically optimal repair bandwidth  while significantly reducing the sub-packetization compared with existing schemes. Furthermore, our approach is capable of repairing RS codes with asymptotically optimal repair bandwidth under the homogeneous storage model, achieving smaller sub-packetization than existing methods.
\end{abstract}
	
\begin{IEEEkeywords}
Reed-Solomon codes, rack-aware storage, optimal repair bandwidth, sub-packetization.		
\end{IEEEkeywords}
\section{Introduction}\label{Sec1}
Most distributed storage systems (DSS) use erasure codes, particularly Reed-Solomon (RS) codes\cite{facebook,baidu,google,mic}, to achieve maximum error correction capability under the given redundancy. 
Specifically, a file consisting of $k$ blocks is encoded into $n$ blocks by an $[n,k]$ RS code and distributed across different nodes, then the system can tolerate any $r=n-k$ nodes failing, where each block is composed of $l$ data units, and $l$ is called the sub-packetization.  Smaller $l$ implies lower implementation complexity, which is useful in practice. 

In DSS, single-node failures are the most common, so this paper focuses on the repair model for single-node failure scenarios. A key metric in node repair is the repair bandwidth (RB), defined as the total number of symbols downloaded to repair a failed node. Guruswami and Wootters \cite{Gu} first characterized linear repair schemes for RS codes using trace functions, which involve ``vectorizing" symbols in \( \mathbb{F}_{q^l} \) into \( l \) sub-symbols in \( \mathbb{F}_q \). Subsequent work aimed to design repair schemes for RS codes to minimize RB \cite{Dau1,Ye1,Ye2,Li,Vardy}.

Most research regarding the repair of RS codes concentrated on the homogeneous storage model, in which nodes are distributed uniformly in different locations. However, 
Modern large-scale data centers typically employ a hierarchical topology known as the rack-aware storage model. In this model, the $n$ storage nodes are organized into $\bar{n}$ racks and each rack contains $u$ nodes where $n={\bar n}u$. Assume $k={\bar k}u+v, v\equiv k \mod u$. 
We refer to a rack containing failed nodes as the host rack. During the repair process of failed nodes, there are two types of communications involved:
\begin{itemize}
    \item Intra-rack communication: the remaining $u - 1$ nodes in the host rack each transmit all the information they store to the replacement node;
    \item Inter-rack communication: in addition to the host rack, $\bar d (\geq \bar k)$ helper racks each transmit part of the information to the primary rack.
\end{itemize}
Moreover, the cross-rack communication cost is much more expensive than the intra-rack communication cost. Therefore, the intra-rack communication cost is negligible naturally and only the communication volume between racks in repairing failed nodes needs to be considered.  

Rack-aware model was first proposed by Hu et al. \cite{Hu1}, aiming to minimize the inter-rack RB of MDS codes. 
Under this model, a lower bound (cut-set bound) of inter-rack RB for $[n,k]$ MDS codes over $F=\mathbb{F}_{q^l}$ was derived in \cite{Hu1,Hou1}, i.e., RB $\geq\frac{\bar{n}-1}{\bar{n}-\bar{k}}l$. 
Subsequently, Chen et al. \cite{Chen1} explicitly constructed MDS codes that are  applicable to all parameters and have optimal RB (achieving the cut-set bound) under the rack-aware model.
For the single-node repair of RS codes, by generalizing the vectorized linear repair scheme under the homogeneous model, several types of RS codes were constructed to achieve the optimal RB under the rack-aware model. However, these codes have very limited parameters. Specifically, in  \cite{Jin}, the authors proposed two types of $[n,k]$ RS codes over $F$, requiring $n=|F|$ or $n=|F|-1$, and $\bar n-\bar k=\frac{l}{l-s}$, resulting in a low code rate $\frac{k}{n}<\frac{1}{2}$, where $s<l$. Additionally, the RS codes in \cite{wang2023} achieve optimal RB but demand an impractically large sub-packetization size $l \approx \bar n^{\bar n}$. 
Overall, these codes either exhibit large sub-packetization or low code rates.  

In this paper, we provide a tradeoff between the sub-packetization level and RB by multi-base expansion. Compared to the RS codes with optimal RB under the rack-aware model in \cite{wang2023}, the RS codes we construct achieve asymptotically optimal RB, but with significantly smaller sub-packetization, as summarized in Table \ref{t4.1}. Specifically, in our construction, $\bar r = \bar n - \bar k \triangleq \prod_{i=1}^{m} p_i$, $\{p_i\}_{i \in [m]}$ are all prime numbers. When $\bar r$ is not a prime (i.e., $m \geq 2$), Construction \ref{cons2} yields the smallest sub-packetization level among the compared schemes. The main contributions of this paper are formalized in Theorem \ref{thm6} and Corollary \ref{cor7}.

\begin{table}[ht]
\centering
\renewcommand\arraystretch{1.25}
\caption{\scriptsize Comparison of the Sub-Packetizetion of  RS Codes with Optimal/Asymptotically Optimal RB.} \label{t4.1}
\resizebox{0.49\textwidth}{!}
{
\footnotesize\begin{tabular}{|l|c|c|}
\hline
  RS codes & Optimal/Asymptotically Optimal & Sub-packetization $l$ \\\hline
  Reference \cite{wang2023} & Optimal & $l \approx \bar{n}^{\bar{n}}$\\ \hline
  Construction \ref{cons1}& Asymptotically Optimal & $l = \bar{r}^{\bar{n}}$\\ \hline
  Construction \ref{cons2}& Asymptotically Optimal & $l \leq \bar{r}^{\lceil{\scalebox{0.6}{$\frac{\bar{n}}{m}$}}\rceil}$\\ \hline
 \end{tabular}
 }
\end{table}

\begin{remark} Specially, if we set $u=1$, i.e., $\bar{n}=n$ and $\bar{r}=r$, then our scheme can be applied to repair RS codes under the homogeneous storage mode. Compared with the RS codes in \cite{Vardy}, which assume $r=s^{m'}$ and $l=s^{m'+n-1}=r^{\frac{m'+n-1}{m'}}$, our codes offer more flexible parameters and smaller sub-packetization. Specifically,
\begin{itemize}
\item if $r=s^{m'},m'\geq 2$, and $s$ is a prime number, our sub-packetization is $l=r^{\frac{n}{m'}}=s^{n}<s^{m'+n-1}$.
\item if $r$ is a prime number, our scheme achieves lower RB than the scheme in \cite{Vardy}. For example, when $r=7$, we can utilize $2\times 3$ to approximate $7$, whereas in \cite{Vardy}, they can only utilize $2^2$ to approximate $7$. Therefore, the RB of our scheme is approximately $\frac{7}{6}$ times the optimal RB, while the RB in \cite{Vardy} is approximately $\frac{7}{4}$ times.
\item if $r=s^{m'},m'\geq 1$, and $s=\prod_{j=1}^{\tilde{m}}p_j$, where $p_j$'s are all prime numbers and $\tilde{m}\geq 2$, then our sub-packetization is $l\leq r^{\lceil{\frac{n}{m'\tilde{m}}}\rceil}<r^{\frac{m'+n-1}{m'}}=s^{m'+n-1}$, where the second inequality is due to $\frac{m'+n-1}{m'}\geq\lceil{\frac{n}{m'}}\rceil$.

\end{itemize}   
\end{remark}


\section{Preliminaries}\label{Sec2}
\subsection{Some Notations}\label{Sec2.1}
For nonnegative integers $i<j$, denote $[j]=\{1,...,j\}$ and $[i,j]=\{i,i+1,...,j\}$. Let $B=\mathbb{F}_q$ be the finite field of $q$ elements and $F=\mathbb{F}_{q^l}$ be a field extension of $B$. 
\begin{definition}[RS code]
The Reed-Solomon code $RS(A,k)$ of dimension $k$ over $F$ with evaluation points $A=\{\alpha_1, \alpha_2 ,\dots,\alpha_n \}\subseteq F$ is defined by
\begin{equation*}
	RS(A,k)=\{(f(\alpha_1),...,f(\alpha_n)):f\in F[x], \mathrm{deg}(f)\leq k-1\}.
\end{equation*} 
\end{definition}

\subsection{Guruswami-Wootters's Repair Scheme}\label{Sec2.3}
In \cite{Gu},
Guruswami and Wootters  gave a characterization of linear repair schemes for RS codes. For any $\alpha\in F$, the trace $\mathrm{tr}_{F/B}(\alpha)$ of $\alpha$ over $B$ is $\mathrm{tr}_{F / B}(\alpha)=\alpha+\alpha^q+\cdots +\alpha^{q^{l-1}}.$ Let $\{\zeta _1,\dots ,\zeta _l\}$ and $\{\mu _1,\dots,\mu _l \}$ be two bases of ${F}$ over ${B}$. Then they are said to be dual bases if $\mathrm{tr}_{F/B}(\zeta_i\mu_j)= \delta_{i,j}$. For any $\alpha\in F$, it is well known that $\alpha=\sum_{i=1}^l{\mathrm{tr}_{F/B}(\zeta _i\alpha)\mu _i}.$ Actually, $\{ \mathrm{tr}_{F/B}(\zeta _i\alpha)\} _{i=1}^{l}$ uniquely determines $\alpha$.

Let $\mathcal{C}=RS(A,k)$ be an $[n,k]$ RS code defined over $F$. Suppose the symbol $c_i$ in a codeword ${\bm c}=( c_1,\dots ,c_n)\in \mathcal{C}$ is erased. The repair scheme, as detailed in \cite{Gu}, can be summarized as follows:
\begin{enumerate}
\item Let $\mathcal{C}^{\bot}$ be the dual code of $\mathcal{C}$, i.e., $\mathcal{C}^{\bot}=\{(\lambda_1g(\alpha_1),...,\lambda_ng(\alpha_n)):g\in F[x], \mathrm{deg}(g)\leq n-k-1\},$
where $\lambda_i=\prod_{j\neq i}(\alpha_i-\alpha_j)^{-1}$.	
	  
\item Find $l$ dual codewords $\{{\bm c}_j^{\bot}=
(\lambda_1g_j(\alpha_1),...,\lambda_ng_j(\alpha_n)):j\in[l]\}\subseteq \mathcal{C} ^{\bot}
$ such that $\{g_1(\alpha_i),\dots,g_l(\alpha_i)\} $ forms a basis of $F$ over $B$. 
\item $c_i$ can be recovered using $\{{\rm tr}_{F/B}(\lambda_tg_j(\alpha_t)c_t)\}_{j\in[l]}$ provided by the surviving node $t\neq i$. The RB is 
\begin{equation*}
	b=\sum_{t\in[n]\setminus\{i\}}{\mathrm{rank}_B(\{ g_j(\alpha_t)\}_{j\in [l]})}.
\end{equation*}
Specifically, since ${g_1(\alpha_i), \dots, g_l(\alpha_i)}$ forms a basis of $F$ over $B$, then $\{ \mathrm{tr}_{F/B}(\lambda_ig_j(\alpha_i)c_i)\} _{j=1}^{l}$ is sufficient to recover $c_i$. Given that ${\bm c}_j^{\bot}\in\mathcal{C}^{\bot}$, it follows $\lambda_ig_j(\alpha_i)c_i=-\sum_{t\in[n]\setminus\{i\}}\lambda_tg_j(\alpha_t)c_t$, and thus ${\rm tr}_{F/B}\big(\lambda_ig_j(\alpha_i)c_i\big)$ can be obtained from
\end{enumerate}
$$ {\rm tr}_{F/B}\big(\lambda_ig_j(\alpha_i)c_i\big)=-\sum_{t\in[n]\setminus\{i\}}{\rm tr}_{F/B}\big(\lambda_tg_j(\alpha_t)c_t\big), \forall j\in[l].$$
In summary, a linear repair scheme for a failed node $i$ in $RS(A,k)$ can be characterized by $l$ polynomials $\{g_j(x)\}_{j\in[l]}$ such that ${\rm rank}_B\big(\{g_j(\alpha_i)\}_{j\in[l]}\big)=l$. 

\subsection{Rack-Aware Distributed Storage System}
Consider an $[n,k]$ RS code $\mathcal{C}$ defined over $F$. Suppose that $n=\bar{n}u,k=\bar{k}u+v$, where $u\mid n$ and $v=k \mod u$ with $v\in[0,u-1]$. Denote $r=n-k$ and $\bar{r}=\bar{n}-\bar{k}$. To avoid the trivial case, we assume $k \geq u$ throughout this paper.

In rack-aware storage systems, the $n = \bar{n}u$ nodes are organized into $\bar{n}$ racks, each containing $u$ nodes. Then, each node can be labeled with pair $(e, m)$, representing the 
$m$-th node in the $e$-th rack. Subsequently, for any codeword $(f(\alpha_1),f(\alpha_2)\dots,f(\alpha_n))$ in $\mathcal{C}$, we represent it as: $$(f(\alpha_{1,1}),\dots,f(\alpha_{1,u}),\dots,f(\alpha_{\bar{n},1}),\dots,f(\alpha_{\bar{n},u}))$$
where $\alpha_{e,m}$ corresponds to the evaluation point for $m$-th node in the $e$-th rack. 
Suppose $\{g_j(x)\}_{j=1}^{l}$ defines a linear repair scheme for some failed node in rack $i$. Since the inner rack bandwidth is negligible in rack-aware storage systems, the RB of the scheme defined by $\{g_j(x)\}_{j=1}^{l}$ is 
\begin{equation}\label{bformula}
b=\sum_{e\in[\bar n]\setminus\{i\}}b_e,
\end{equation}    
where $b_e=\mathrm{rank}_B(\{g_j(\alpha_{e,m}):m\in[u],j\in[l]\})$. A lower bound on the RB, also known as the cut-set bound, is given by \cite{Hu1,Hou1} as follows:
\begin{equation}\label{bbound}
	b\geq\frac{(\bar{n}-1)l}{\bar{r}}.
\end{equation}	
For convenience, we denote $b_{\min}=\frac{(\bar{n}-1)l}{\bar{r}}.$

\section{Repairing Reed-Solomon Codes via Monomials}\label{Sec3}
In this section, we present a linear repair scheme for RS codes utilizing monomials. The RS codes in this section are suitable for all admissible parameters. Furthermore, compared with the RS codes in \cite{Chen1} that achieve optimal repair bandwidth, our codes offer asymptotically optimal RB with significantly reduced sub-packetization.

\subsection{Repairing RS Codes for All Admissible Parameters}\label{Sec3.1}
Set $l=\bar r^{\bar n}$. Let $\zeta$ be a primitive element of $F$, then the order of $\zeta$ is $q^l-1$, denoted by $o(\zeta)=q^l-1$. 
\begin{proposition}\label{pro7}
For any	$u>0$ such that $u\mid q-1$, it has ${\rm rank}_B\big(\{1,\zeta^u,\zeta^{2u},\dots,\zeta^{(l-1)u}\}\big)=l$.
\end{proposition}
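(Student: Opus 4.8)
The plan is to show that the $\mathbb{F}_q$-span of $\{1,\zeta^u,\zeta^{2u},\dots,\zeta^{(l-1)u}\}$ is all of $F=\mathbb{F}_{q^l}$, which — since there are exactly $l$ elements and $\dim_B F = l$ — is equivalent to the rank being $l$. Write $\eta=\zeta^u$. Because $u\mid q-1$, the map $x\mapsto x^u$ on $\mathbb{F}_q^\times$ has image the subgroup of $(q-1)/u$-th... more usefully, $\gcd(u,q^l-1)$ is what controls $o(\eta)$: since $o(\zeta)=q^l-1$, we have $o(\eta)=o(\zeta^u)=\tfrac{q^l-1}{\gcd(u,q^l-1)}$. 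The key arithmetic observation is that $u\mid q-1\mid q^l-1$, so $\gcd(u,q^l-1)=u$ and hence $o(\eta)=\tfrac{q^l-1}{u}$. I would first record this fact, because everything hinges on $\eta$ still having large multiplicative order.

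Next I would identify the subfield generated by $\eta$. Let $K=\mathbb{F}_q(\eta)=\mathbb{F}_{q^t}$ for the unique $t\mid l$ with $\eta\in\mathbb{F}_{q^t}$; equivalently $t$ is minimal with $o(\eta)\mid q^t-1$. Then $\{1,\eta,\eta^2,\dots,\eta^{l-1}\}\subseteq K$, so the $B$-span has dimension at most $t$, with equality iff $\eta$ generates $K$ as a field, which it does by definition of $K$. Thus the rank equals $t=[K:B]$, and the statement ${\rm rank}_B=l$ is equivalent to $t=l$, i.e.\ to $K=F$. So the whole proposition reduces to: \emph{$\eta=\zeta^u$ generates $\mathbb{F}_{q^l}$ over $\mathbb{F}_q$.}

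To finish, I would show $o(\eta)=\tfrac{q^l-1}{u}$ does not divide $q^t-1$ for any proper divisor $t<l$ of $l$. Suppose $t\mid l$, $t<l$. Then $q^t-1$ divides $q^l-1$ properly, in fact $\tfrac{q^l-1}{q^t-1}=1+q^t+q^{2t}+\cdots+q^{l-t}\ge 1+q^t > u$ (using $u\mid q-1$, so $u\le q-1<q^t$, and $l/t\ge 2$). Hence $q^t-1 \le \tfrac{q^l-1}{1+q^t} < \tfrac{q^l-1}{u}=o(\eta)$, so $o(\eta)\nmid q^t-1$. Therefore no proper subfield of $F$ contains $\eta$, forcing $K=F$ and $t=l$, which gives ${\rm rank}_B\big(\{1,\zeta^u,\dots,\zeta^{(l-1)u}\}\big)=l$.

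The main obstacle — really the only subtle point — is the order computation and the divisor estimate: one must be careful that $u\mid q-1$ (not merely $u\mid q^l-1$) is exactly what guarantees both $\gcd(u,q^l-1)=u$ and the clean inequality $u < q^t-1$ for every proper divisor $t$ of $l$ with $t\ge 1$; without $u\le q-1$ the element $\zeta^u$ could collapse into a proper subfield and the rank would drop. Everything else is the standard fact that powers of a field generator span the field.
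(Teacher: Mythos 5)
Your proof is correct and follows essentially the same route as the paper's: reduce to showing that $\zeta^u$ generates $F$ over $B$, then use $u\mid q-1$ (hence $u\le q-1<q^t$) to rule out $\zeta^u$ lying in a proper subfield $\mathbb{F}_{q^t}$ via the estimate $\frac{q^l-1}{q^t-1}\ge 1+q^t>u$. The paper phrases the order argument as the divisibility $(q^l-1)\mid u(q^t-1)$ rather than first computing $o(\zeta^u)=\frac{q^l-1}{u}$, but these are the same computation in slightly different clothing.
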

\begin{proof}
Suppose that $B(\zeta^u)=\mathbb{F}_{q^t}$, then $t\mid l$ and $(\zeta^u)^{q^t-1}=1$. Since $o(\zeta)=q^l-1$, it follows that $(q^l-1)\mid u(q^t-1)$. Combining with $u\mid q-1$, we know $\frac{q^l-1}{q^t-1}\leq u\leq q-1$, which implies that $t=l$. Therefore, $\{1,\zeta^u,\zeta^{2u},\dots,\zeta^{(l-1)u}\}$ are linearly independent over $B$. The proof is completed.
\end{proof}

Next, we introduce a family of $[n,k]$ RS code $RS(A,k)$  and the corresponding linear repair schemes which can be  derived from the construction in \cite{Ye1} easily. As beginning, for any $t\in[0,l-1]$, denote its $\bar r$-ary expansion as $t=(t_1,t_2,\dots,t_{\bar n})$, i.e., $t=\sum_{i=1}^{\bar{n}}t_i\bar{r}^{i-1}$, where  $t_i\in[0,\bar{r}-1]$.
\begin{construction}\label{cons1}
We first define the evaluation points $A$. Suppose $u\mid q-1$. Choose some $\alpha\in B$ such that $o(\alpha)=u$. For $i\in[\bar{n}]$, let $A_i=\{\alpha_{i,1},\alpha_{i,2},\dots,\alpha_{i,u}\}$, where $\alpha_{i,j}=\zeta^{\bar{r}^{i-1}}\alpha^j$ for $j\in[u]$. It is obviously that $A_1,...,A_{\bar{n}}$ are pairwise disjoint. Define $A=\bigcup_{i=1}^{\bar n}A_i$, it has $|A|=\bar nu=n$.

Then, we define the repair polynomials $\{g_{t,s}(x):t\in T_i,s\in[0,\bar r-1]\}$ for failed node $\alpha_{i,j}$ in rack $i$, $i\in[\bar n],j\in[u]$. 
Denote $T_i=\{t\in[0,l-1]:t_i=0\}$. We have  $|T_i|={\bar r}^{\bar{n}-1}$. For $t\in T_i$ and $s\in[0,\bar r-1]$, define 
\begin{equation}\label{eq8}
	g_{t,s}(x)=\zeta^{ut}x^{us}. 
\end{equation}
It is clear that $\deg({g_{t,s}(x)})\leq u\bar r-u\leq r-1$.
\end{construction}    
We first check that $\{g_{t,s}(x):t\in T_i,s\in[0,\bar r-1]\}$ satisfies the repair condition.
It can be seen $g_{t,s}(\alpha_{i,j})=\zeta^{ut}\zeta^{us\bar r^{i-1}}\alpha^{jus}=(\zeta^{u})^{t+s\bar r^{i-1}}$, where the last equality is due to $o(\alpha)=u$. Consequently,
\begin{align*}
	&\quad \ \{g_{t,s}(\alpha_{i,j}):t\in T_i,s\in[0,\bar r-1]\}\\
	&=\{(\zeta^{u})^{t+s\bar r^{i-1}}:t\in T_i,s\in[0,\bar r-1]\}\\
	&=\{(\zeta^{u})^a:a\in[0,l-1]\}.
\end{align*}
It immediately follows from Proposition \ref{pro7} that $\mathrm{rank}_{B}(\{g_{t,s}(\alpha_{i,j}):t\in T_i,s\in[0,\bar r-1]\})=l$. Therefore, $\{g_{t,s}(x):t\in T,s\in[0,\bar r-1]\}$ defines a repair scheme for node $\alpha_{i,j}$. We then proceed to calculate the RB of this scheme.                                   
\begin{theorem}\label{thm8}
The repair bandwidth, denoted by $b$, of the linear repair scheme for RS$(A,k)$ defined in Construction \ref{cons1} satisfies $b<\frac{(\bar n+1)l}{\bar r}$.	
\end{theorem}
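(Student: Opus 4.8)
The plan is to bound the repair bandwidth $b=\sum_{e\in[\bar n]\setminus\{i\}}b_e$ by estimating each $b_e=\mathrm{rank}_B(\{g_{t,s}(\alpha_{e,m}):m\in[u],s\in[0,\bar r-1],t\in T_i\})$ for a helper rack $e\neq i$. First I would compute $g_{t,s}(\alpha_{e,m})=\zeta^{ut}\zeta^{us\bar r^{e-1}}\alpha^{mus}=(\zeta^u)^{t+s\bar r^{e-1}}$, exactly as in the verification of the repair condition but now with the rack index $e$ rather than $i$. Since $t$ ranges over $T_i=\{t:t_i=0\}$ and $s\in[0,\bar r-1]$, the exponents $t+s\bar r^{e-1}$ run over the set $S_e=\{t+s\bar r^{e-1}: t\in T_i,\ s\in[0,\bar r-1]\}\subseteq[0,l-1]$ (viewed mod $l$, but since $t_i=0$ the sum $t+s\bar r^{e-1}$ is a legitimate $\bar r$-ary number of length $\bar n$ only when we are careful about carries into digit $e$). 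The key observation is that $b_e\le |S_e|$, and more usefully $b_e$ is at most the number of \emph{distinct} values $(\zeta^u)^a$ for $a\in S_e$; by Proposition~\ref{pro7} these powers are linearly independent, so in fact $b_e=|S_e|$ as a set of exponents in $[0,l-1]$.

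Next I would count $|S_e|$. Write each $t\in T_i$ in $\bar r$-ary form $t=(t_1,\dots,t_{\bar n})$ with $t_i=0$; then $t+s\bar r^{e-1}$ adds $s$ to digit $e$. If $e\ne i$, digit $e$ of $t$ is free in $[0,\bar r-1]$, so $t_e+s$ can produce a carry; the resulting value has digit $i$ equal to $0$ \emph{unless} a carry chain propagates from position $e$ all the way up to position $i$. So $S_e\subseteq\{a\in[0,l-1]: a_i=0\}\cup\{a\in[0,l-1]: a_i=1\text{ and }a_{i+1}=\dots=a_{e}=0, \text{(with indices read cyclically up to }i)\}$ — i.e. most values land back in $T_i$ (contributing $\le \bar r^{\bar n-1}$) and only a thin slice with $a_i=1$ arising from a full carry chain can escape (contributing at most another $\bar r^{\bar n-1}$, very crudely, but in fact far fewer). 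The clean bound I would extract is $b_e=|S_e|\le 2\bar r^{\bar n-1}=\frac{2l}{\bar r}$ for each of the $\bar n-1$ helper racks, which already gives $b\le \frac{2(\bar n-1)l}{\bar r}$; to get the sharper $b<\frac{(\bar n+1)l}{\bar r}$ claimed, I would argue that the "carry-overflow" contributions summed over all $e\ne i$ total at most $2\bar r^{\bar n-1}$ (not $(\bar n-1)\bar r^{\bar n-1}$), since the overflow set for rack $e$ requires digits $e+1,\dots,i$ (cyclically) to vanish, and these sets for different $e$ are nested/disjoint in a way that makes $\sum_{e\ne i}(|S_e|-\bar r^{\bar n-1})\le 2\bar r^{\bar n-1}$. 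Combining, $b\le (\bar n-1)\bar r^{\bar n-1}+2\bar r^{\bar n-1}=(\bar n+1)\bar r^{\bar n-1}=\frac{(\bar n+1)l}{\bar r}$, with strictness because at least one term $|S_e|$ is strictly less than $\bar r^{\bar n-1}+$ its share.

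The main obstacle I anticipate is the careful bookkeeping of carry propagation in the $\bar r$-ary addition $t+s\bar r^{e-1}$: one must track exactly which exponents $a\in S_e$ fail the condition "$a_i=0$", show these exceptional ones still lie in a controlled set, and then sum the sizes of these exceptional sets across all helper racks $e$ without overcounting. A subtle point is whether indices should be read cyclically (because $\bar r^{\bar n}\equiv$ something nontrivial determines wrap-around in $[0,l-1]$ — in fact $\bar r^{\bar n}=l$ so there is no wrap, carries simply stop at digit $\bar n$), which simplifies matters: a carry out of the top digit is impossible, so only helper racks $e<i$ can have overflow into digit $i$, and the overflow set shrinks as $i-e$ grows. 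Making this summation tight enough to land below $\frac{(\bar n+1)l}{\bar r}$ rather than merely $\frac{2(\bar n-1)l}{\bar r}$ is the crux; everything else (the evaluation $g_{t,s}(\alpha_{e,m})=(\zeta^u)^{t+s\bar r^{e-1}}$, the linear-independence input from Proposition~\ref{pro7}, the degree check) is routine.
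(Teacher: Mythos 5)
Your overall plan — compute $b_e$ by analyzing the exponent set $S_e=\{t+s\bar r^{e-1}\}$, split it into the ``no escape'' part $T_i$ and a small ``overflow'' correction, and sum a geometric series over the helper racks — is the same as the paper's. The final arithmetic also matches: the corrections total $<\tfrac{2l}{\bar r}$, yielding $b<\tfrac{(\bar n+1)l}{\bar r}$.

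However, there is a genuine gap in the case $e>i$. You assert that ``a carry out of the top digit is impossible, so only helper racks $e<i$ can have overflow,'' concluding $S_e\subseteq T_i$ for $e>i$. This is false. For $t\in T_i$ the digits $t_{e},t_{e+1},\dots,t_{\bar n}$ are unconstrained (only $t_i=0$, and $e>i$), so one can have $t_{e+1}=\dots=t_{\bar n}=\bar r-1$ and $t_e+s\geq\bar r$, in which case the integer $a=t+s\bar r^{e-1}$ satisfies $a\geq l$. These exponents do not lie in $[0,l-1]$, so Proposition \ref{pro7} does not directly apply to them; the element $(\zeta^u)^a$ with $a\geq l$ is \emph{not} among the basis $\{1,\zeta^u,\dots,\zeta^{(l-1)u}\}$ and need not lie in the $B$-span of $\{(\zeta^u)^a:a\in T_i\}$ (that span is a proper $\tfrac{l}{\bar r}$-dimensional subspace of $F$). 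The paper handles this by writing $a=l+t'$ with $t'\in T_i$ and $t'_{e+1}=\dots=t'_{\bar n}=0$, observing that multiplication by the fixed nonzero constant $\zeta^{ul}$ preserves $B$-linear rank, and thus getting the extra rank contribution $(\bar r-1)\tfrac{l}{\bar r^{\bar n-e+2}}$ for each $e>i$. Your proposal omits this entirely. Incidentally, if you carried your stated reasoning through — no overflow at all for $e>i$, and only the $e<i$ geometric tail — you would arrive at the stronger (but unjustified) bound $b<\tfrac{\bar n l}{\bar r}$, not the claimed $\tfrac{(\bar n+1)l}{\bar r}$; the factor-of-two correction term you inserted at the end is precisely what the missing $e>i$ analysis is supposed to supply, so the $2\bar r^{\bar n-1}$ in your summation is pulled from thin air rather than derived. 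To fix this, treat $e>i$ symmetrically to $e<i$: decompose $S_e$ into the exponents $<l$ (all in $T_i$) and those $\geq l$ (of the form $l+t'$), bound the second set's rank by its cardinality after dividing out $\zeta^{ul}$, and then you get the second geometric series summing to $<\tfrac{l}{\bar r}$.
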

\begin{proof}
Suppose node $\alpha_{i,j}$ in rack $i$ fails, $i\in[\bar n],j\in[u]$. Then, $\{g_{t,s}(x):t\in T_i,s\in[0,\bar{r}-1]\}$, defined as in Construction \ref{cons1}, characterizes a linear repair scheme for node $\alpha_{i,j}$.
According to (\ref{bformula}), the RB is $b=\sum_{e\in[\bar n]\setminus\{i\}}b_e$, where $$b_e=\mathrm{rank}_B(\{g_{t,s}(\alpha_{e,j}):j\in[u],t\in T_i,s\in[0,\bar r-1]\}).$$
For any $j\in[u]$, $g_{t,s}(\alpha_{e,j})=(\zeta^{u})^{t+s\bar r^{e-1}}$, thus $b_e=\mathrm{rank}_B\{(\zeta^{u})^{t+s\bar r^{e-1}}:t\in T_i,s\in[0,\bar r-1]\}.$
Then we can do some similar discussion  in \cite{Ye1} as follows:
\begin{itemize}
\item[1)] When $e<i$,  
\begin{align*}&\{(\zeta^{u})^{t+s\bar r^{e-1}}:t\in T_i,s\in[0,\bar r-1]\}=\{\zeta^{ut}:t\in T_i\}\cup\\
	&\big(\bigcup_{s=0}^{\bar r-2}\{\zeta^{ut}:t_i=1, t_e=s, t_{e+1}=\cdots=t_{i-1}=0\}\big).
		\end{align*} Hence
	$b_e\leq \frac{l}{\bar r}+(\bar r-1)\frac{l}{\bar r^{i-e+1}}$.
\item[2)] When $e>i$,
\begin{align*}&\{(\zeta^{u})^{t+s\bar r^{e-1}}:t\in T_i,s\in[0,\bar r-1]\}=\{\zeta^{ut}:t\in T_i\}\cup\\
	&\big(\bigcup_{s=0}^{\bar r-2}\{\zeta^{u(l+t)}:t_i=0, t_e=s, t_{e+1}=\cdots=t_{\bar n}=0\}\big).
\end{align*}
Hence $b_e\leq \frac{l}{\bar r}+(\bar r-1)\frac{l}{\bar r^{\bar n-e+2}}$.
\end{itemize}
Therefore, an upper bound on the sum of the dimensions is given by:
$b=\sum_{e\in[\bar n]\setminus\{i\}}b_e<\frac{(\bar n+1)l}{\bar r}.$\end{proof}
  It is observed that $\frac{b}{b_{\min}}\rightarrow 1$ as $\bar n\rightarrow\infty$, indicating that our repair scheme approaches optimality.

\subsection{Further Reducing The Sub-Packetization Level}\label{Sec3.2}
In this section, we utilize the multi-base expansion which will be defined in Theorem \ref{thm9} to further reduce the sub-packetization $l$ of RS$(A,k)$ in Construction \ref{cons1} while maintaining the asymptotically optimal repair property.
Suppose that $\bar r=\bar n-\bar k=p_1p_2\cdots p_m$ and $m\geq 2$\footnote{The requirement that $m\geq 2$ here is to distinguish the new construction from Construction \ref{cons1}, and this restriction will be dropped later.}, where $\{p_i\}_{i\in[m]}$ are all primes.  
\begin{theorem}[Multi-base expansion]\label{thm9}
For any $a\in [0, \bar r-1]$, there exists unique $\{a_i\}_{i\in[m]}$ such that \begin{equation}\label{eq9}
a=a_1+a_2p_1+\cdots+a_i\prod_{j=1}^{i-1}{p_j}+\cdots+a_m\prod_{j=1}^{m-1}{p_j}, 
\end{equation}where $a_i\in[0,p_i-1]$. 
\end{theorem}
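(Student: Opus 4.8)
The plan is to prove this by a standard mixed-radix (multi-base) argument, which is exactly the generalization of ordinary $\bar r$-ary expansion. Write $P_i = \prod_{j=1}^{i-1} p_j$ for $i \in [m]$ (so $P_1 = 1$) and note $P_{m+1} = \prod_{j=1}^m p_j = \bar r$. The claim is that the map $(a_1,\dots,a_m) \mapsto \sum_{i=1}^m a_i P_i$ is a bijection from $\prod_{i=1}^m [0,p_i-1]$ onto $[0,\bar r-1]$.

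First I would establish surjectivity constructively by giving an explicit "greedy" algorithm that produces the digits: given $a \in [0,\bar r-1]$, peel off the digits from the top, i.e. set $a_m = \lfloor a / P_m \rfloor$, then subtract and recurse, or equivalently define the digits bottom-up via $a_i \equiv \lfloor a/P_i \rfloor \pmod{p_i}$. I would verify by induction on $i$ that the partial sum $\sum_{i'=1}^{i} a_{i'} P_{i'}$ equals $a \bmod P_{i+1}$, using the key identity $P_{i+1} = p_i P_i$; the induction step is a routine division-with-remainder computation. The base case ($i=0$, empty sum $=0 = a \bmod P_1$... better to start at $i=1$) and step both reduce to the fact that any integer in $[0, P_{i+1}-1]$ can be written uniquely as $qP_i + s$ with $q \in [0,p_i-1]$ and $s \in [0,P_i-1]$.

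Then I would handle uniqueness, which together with surjectivity and a simple counting argument actually makes one of the two superfluous: the domain has exactly $\prod_{i=1}^m p_i = \bar r$ elements and the codomain $[0,\bar r-1]$ also has $\bar r$ elements, so a surjection between them is automatically a bijection. Hence it suffices to prove \emph{either} existence \emph{or} uniqueness plus the (trivial) observation that every such sum lies in $[0,\bar r-1]$ — the latter because $\sum_{i=1}^m a_i P_i \le \sum_{i=1}^m (p_i-1)P_i = \sum_{i=1}^m (P_{i+1}-P_i) = P_{m+1} - P_1 = \bar r - 1$ by telescoping. For a clean self-contained write-up I would prove uniqueness directly: if $\sum a_i P_i = \sum a_i' P_i$, reduce both sides mod $p_1$ (all terms with $i\ge 2$ vanish since $p_1 \mid P_i$) to get $a_1 = a_1'$, cancel, divide by $p_1$, and repeat.

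I do not expect any genuine obstacle here — this is a classical fact about mixed-radix numeral systems and the only things to be careful about are bookkeeping of the indices/products $P_i$ and making the induction hypothesis precise (stating it as "the partial sum over the first $i$ digits recovers $a \bmod P_{i+1}$"). The mild subtlety worth flagging is that the $p_i$ need not be distinct primes — they are just the (not necessarily distinct) prime factors of $\bar r$ in a fixed order — but nothing in the argument uses distinctness or primality beyond $p_i \ge 2$, so the proof goes through verbatim; in fact primality is only relevant elsewhere in the construction, not in Theorem~\ref{thm9} itself.
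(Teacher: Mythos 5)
Your proposal is correct and follows the same overall strategy as the paper: prove uniqueness, then deduce existence from the counting argument that the domain $\prod_{i=1}^m[0,p_i-1]$ and codomain $[0,\bar r-1]$ both have exactly $\bar r$ elements, together with the telescoping bound $\sum_i a_i P_i \le \prod_j p_j - 1 = \bar r - 1$ (the paper performs precisely this computation). The only point of divergence is the uniqueness subargument itself: the paper picks the smallest index $y$ at which two putative expansions disagree, moves all earlier terms to one side, and derives a contradiction by comparing magnitudes ($\big(\prod_{j<y}p_j\big)-1 < (a_y-a_y')\prod_{j<y}p_j$), whereas you reduce modulo $p_1$ to force $a_1=a_1'$, cancel, divide by $p_1$, and iterate. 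Both are routine and equally short; the mod-and-divide route has the minor advantage of making the induction explicit, while the paper's version avoids division entirely. Your side remark that primality plays no role here and only $p_i\ge 2$ is used is also correct and consistent with the paper's own observation that the theorem does not even depend on the ordering of the primes.
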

\begin{proof}
First, we suppose \eqref{eq9} holds and prove the uniqueness of the expression of $a$. For some $a\in[0, \bar r-1]$, suppose that there exist other integers  $\{a_i'\}_{i\in[m]}$ such that 
\begin{equation}\label{a}
a=a_1'+a_2'p_1+\cdots+a_i'\prod_{j=1}^{i-1}{p_j}+\cdots+a_m'\prod_{j=1}^{m-1}{p_j}.    
\end{equation} Let $y$ be the smallest integer in $[m]$ satisfying $a_y\ne a_y'$. We may assume $a_y>a_y'$. Combining (\ref{eq9}) and (\ref{a}), it has 
\begin{equation}\label{eq6}
 (a_y-a_y')\prod_{j=1}^{y-1}{p_j}=\sum_{i=1}^{y-1}(a_i'-a_i)\prod_{j=1}^{i-1}{p_j}.   
\end{equation}
Since $a_i,a_i'\in[0,p_i-1]$, we know $\sum_{i=1}^{y-1}(a_i'-a_i)\prod_{j=1}^{i-1}{p_j}\leq \sum_{i=1}^{y-1}(p_i-1)\prod_{j=1}^{i-1}{p_j}=\big(\prod_{j=1}^{y-1}{p_j}\big)-1.$ However,
$\big(\prod_{j=1}^{y-1}{p_j}\big)-1<\prod_{j=1}^{y-1}{p_j}\leq(a_y-a_y')\prod_{j=1}^{y-1}{p_j}$, which contradicts (\ref{eq6}).
		
Then, we establish the validity of the expression in \eqref{eq9}. Denote the right hand of \eqref{eq9} by $\bm a=(a_1,a_2,\dots,a_m)$. From the uniqueness proved previously, $\{\bm a:a_i\in[0,p_i-1], i\in[m]\}$ can represent $\prod_{i=1}^{m}p_i=\bar{r}$ distinct integers. Since $$0\leq\sum_{i=1}^{m}a_i\prod_{j=1}^{i-1}{p_j}\leq \sum_{i=1}^{m}(p_i-1)\prod_{j=1}^{i-1}{p_j}=\big(\prod_{j=1}^{m}{p_j}\big)-1=\bar r-1,$$ the proof is completed.
\end{proof}
\begin{remark}
Actually, Theorem \ref{thm9} does not depend on the order of these primes.    
\end{remark}

Set $l=\bar r^{\lfloor{\frac{\bar n}{m}}\rfloor}\prod_{j=1}^{h}p_j\leq r^{\lceil{\frac{\bar n}{m}}\rceil}$, where $h\equiv\bar n\mod m$. Let $\zeta$ be a primitive element of $F$, then $o(\zeta)=q^l-1$. 

For simplicity, we consider $m\mid \bar n$ first, i.e., $l=\bar r^{n'}$, where $n'=\frac{\bar n}{m}$. To rule out the trivial case, we assume $n'\geq 2.$ Denote $d_{w,y}=\bar r^{w}\prod_{j=0}^{y-1}{p_j}$, where $p_0=1$. For any $t\in[0,l-1]$, it follows from Theorem \ref{thm9} there exists unique $\{t_1,...,t_{n'm}\}$ such that $$t=\sum_{w=0}^{n'-1}\sum_{y=1}^mt_{wm+y}d_{w,y},$$
where $t_{wm + y }\in[0,p_{y}-1]$. We refer to  $$(t_{1},\dots,t_{m},\dots,t_{(n'-1)m+1},\dots,t_{n'm})$$ as the multi-base expansion of $t$. 

Next, we present the construction of $[n,k]$ RS code $RS(A,k)$ and the corresponding linear repair schemes.
\begin{construction}\label{cons2}
We first define the evaluation points $A$. Suppose $u\mid q-1$. Choose some $\alpha\in B$ such that $o(\alpha)=u$. For $w\in[0,n'-1], y\in[m]$, let $A_{w,y}=\{\alpha_{w,y,1},\alpha_{w,y,2},\dots,\alpha_{w,y,u}\}$, where $\alpha_{w,y,j}=\zeta^{{d_{w,y}}}\alpha^j$ for $j\in[u]$. It is clear that $A_{0,1},...,A_{n'-1,m}$ are pairwise disjoint. Define $A=\bigcup_{w\in[0,n'-1], y\in[m]}A_{w,y}$, it has $|A|=n'mu=\bar nu=n$. 	

Then, we define the repair polynomials $\{g_{t,s}(x):t\in T_i,s\in[0,\bar r-1]\}$ for failed node $\alpha_{w,y,j}$ in rack $(w,y)$, $w\in[0,n'-1], y\in[m],j\in[u]$. If $w\in[0,n'-2]$ define
$$T_{w,y}=\{t\in[0,l-1]:t_{wm+y+y'}=0,y'\in[0,m-1]\},$$
if $w=n'-1$, define $$ T_{n'-1,y}=\{t\in[0,l-1]:t_{\overline{(n'-1)m+{y+y'}}}=0,y'\in[0,m-1]\},$$
where
$$ \overline{(n'-1)m+{y+y'}}=\begin{cases}
	(n'-1)m+{y+y'}  &\!\! \text{if} ~ y+y'\leq m \\
    {y+y'} \mod m  &\!\! \text{otherwise}
\end{cases}.$$
For $t\in T_{w,y}$ and $s\in[0,\bar r-1]$, define 
\begin{equation}
	g_{t,s}(x)=\zeta^{ut}x^{us}. 
\end{equation}
It is clear that $\deg({g_{t,s}(x)})\leq u\bar r-u\leq r-1$.
           
\end{construction}

\begin{remark}
This construction is also suitable for the case $m\nmid \bar n$. When $m\nmid \bar n$, set $l=\bar r^{\lfloor{\frac{\bar n}{m}}\rfloor}\prod_{j=1}^{h}p_j$, where $h=\bar n\mod m$. For any $t\in[0,l-1]$, we can transform its multi-base expansion to $$(t_{1},\dots,t_{(\lfloor{\frac{\bar n}{m}}\rfloor-1)m+1},\dots,t_{\lfloor{\frac{\bar n}{m}}\rfloor m}, t_{\lfloor{\frac{\bar n}{m}}\rfloor m+1},\dots,t_{\lfloor{\frac{\bar n}{m}}\rfloor m+h}),$$
where $t_{wm+y}$ is the $(wm + y )$-th digit corresponding to $d_{w,y}$ as defined previously. 

The evaluation points $A$ can be selected as follows. Suppose $u\mid q-1$. Choose some $\alpha\in B$ such that $o(\alpha)=u$. For $w\in[0,\lfloor{\frac{\bar n}{m}}\rfloor], y\in[m]$, let $A_{w,y}=\{\alpha\zeta^{{d_{w,y}}}, \alpha^2\zeta^{{d_{w,y}}},\dots, \alpha^{u}\zeta^{{d_{w,y}}}\}.$
Define $$A=(\bigcup_{w\in[0,\lfloor{\frac{\bar n}{m}}\rfloor-1],y\in[m]}A_{w,y})\cup(\bigcup_{y\in[h]}A_{\lfloor{\frac{\bar n}{m}}\rfloor,y}),$$ it has $|A|=(m\lfloor{\frac{\bar n}{m}}\rfloor+h)u=\bar nu=n$.

The repair polynomials are analogous to those in Construction \ref{cons2}.
\end{remark}
The following Proposition \ref{pro11} and Proposition \ref{pro12} indicate the polynomials constructed in Construction \ref{cons2} can be used to repair the failed nodes.
The proofs of Proposition \ref{pro11} and Proposition \ref{pro12} are provided in the Appendix.

\begin{proposition}\label{pro11}
If $w\in[0,n'-2]$, $\mathrm{rank}_{B}(\{g_{t,s}(\alpha_{w,y,j}):t\in T_{w,y},s\in [0,\bar r-1]\})=l$ for any $y\in[m]$ and $j\in[u]$, where $T_{w,y}$ is defined as in Construction \ref{cons2}.
\end{proposition}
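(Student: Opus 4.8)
The plan is to show that the set $\{g_{t,s}(\alpha_{w,y,j}):t\in T_{w,y},s\in[0,\bar r-1]\}$ equals $\{(\zeta^u)^a : a\in[0,l-1]\}$ (possibly up to reindexing exponents modulo $o(\zeta^u)$), and then invoke Proposition \ref{pro7} to conclude that this set has full rank $l$ over $B$. The first step is the same evaluation computation used for Construction \ref{cons1}: since $\alpha_{w,y,j}=\zeta^{d_{w,y}}\alpha^j$ and $o(\alpha)=u$, we get $g_{t,s}(\alpha_{w,y,j})=\zeta^{ut}\zeta^{us\,d_{w,y}}\alpha^{jus}=(\zeta^u)^{t+s\,d_{w,y}}$, which is independent of $j$. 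So it suffices to prove that the map $(t,s)\mapsto t+s\,d_{w,y}$ is a bijection from $T_{w,y}\times[0,\bar r-1]$ onto $[0,l-1]$.

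The core combinatorial claim is this bijection. Write $t\in T_{w,y}$ via its multi-base expansion; by definition of $T_{w,y}$ (for $w\in[0,n'-2]$) the $m$ consecutive digits $t_{wm+y},t_{wm+y+1},\dots,t_{wm+y+m-1}$ are all zero, so $t$ ranges over a set of size $\bar r^{\,n'-1}=l/\bar r$. Meanwhile $s\in[0,\bar r-1]$ has its own multi-base expansion $(s_1,\dots,s_m)$ with $s_y'\in[0,p_{y'}-1]$, and $s\,d_{w,y}=\sum_{y'=1}^m s_{y'}\,\bar r^{w}\prod_{j=0}^{y'-1}p_j = \sum_{y'=1}^m s_{y'}\,d_{w,y'}$. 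I would need to check carefully how these digits $d_{w,y'}$ line up against the digit-positions $wm+1,\dots,wm+m$ of the multi-base expansion used for $[0,l-1]$; the point is that $s\,d_{w,y}$ exactly fills the block of $m$ digit-positions that $T_{w,y}$ leaves vacant. Hence adding $t$ and $s\,d_{w,y}$ reassembles a complete multi-base expansion with all $n'm$ digits free, and by the uniqueness half of Theorem \ref{thm9} (applied block-by-block) this gives a bijection onto $[0,l-1]$. A short counting argument ($|T_{w,y}\times[0,\bar r-1]| = (l/\bar r)\cdot\bar r = l$, matching $|[0,l-1]|$) then lets injectivity alone imply surjectivity, so only injectivity of $(t,s)\mapsto t+s\,d_{w,y}$ needs a direct verification, and that follows from the uniqueness in Theorem \ref{thm9}.

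I expect the main obstacle to be bookkeeping rather than any deep idea: one has to match the index $y$ chosen for the failed rack against the offsets in $d_{w,y'}=\bar r^{w}\prod_{j=0}^{y'-1}p_j$ and confirm that the "gap" of $m$ zeroed digits created by $T_{w,y}$ aligns precisely with the digit positions that $s$ can populate. Because $w\in[0,n'-2]$ this gap stays within a single block of digit-positions $\{wm+1,\dots,wm+m\}$ and never wraps around, which is exactly what makes this case cleaner than the $w=n'-1$ case handled in Proposition \ref{pro12}; I would emphasize that non-wrapping in the write-up. Once the bijection is established, the conclusion is immediate from Proposition \ref{pro7}, since $o(\zeta^u)=(q^l-1)/\gcd(q^l-1,u)$ and $u\mid q-1$ forces $\gcd(q^l-1,u)=u$, so the exponents $\{t+s\,d_{w,y}\}$ running over $[0,l-1]$ give exactly $\{1,\zeta^u,\dots,\zeta^{(l-1)u}\}$, a $B$-basis of $F$.
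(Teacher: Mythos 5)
Your overall plan matches the paper's proof exactly: evaluate to get $g_{t,s}(\alpha_{w,y,j})=(\zeta^u)^{t+s\,d_{w,y}}$, reduce to showing $(t,s)\mapsto t+s\,d_{w,y}$ is a bijection from $T_{w,y}\times[0,\bar r-1]$ onto $[0,l-1]$, and then invoke Proposition~\ref{pro7}. But the ``bookkeeping'' you deferred is precisely where your concrete assertions fail for $y>1$, and fixing them requires an ingredient you did not name.

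First, the identity $s\,d_{w,y}=\sum_{y'=1}^{m}s_{y'}\bar r^{w}\prod_{j=0}^{y'-1}p_j=\sum_{y'=1}^m s_{y'}d_{w,y'}$ is false unless $y=1$. Writing $s=\sum_{y'=1}^m s_{y'}\prod_{j=0}^{y'-1}p_j$ gives $s\,d_{w,y}=d_{w,y}\sum_{y'=1}^m s_{y'}\prod_{j=0}^{y'-1}p_j$, and since $d_{w,y}=\bar r^w\prod_{j=0}^{y-1}p_j\neq\bar r^w$ for $y>1$, your second equality does not hold. Second, and more substantively, the claim that the gap of $m$ zeroed digits ``stays within a single block of digit-positions $\{wm+1,\dots,wm+m\}$'' is also false for $y>1$: $T_{w,y}$ zeroes positions $wm+y,\dots,wm+y+m-1$, which for $y\geq 2$ straddles blocks $w$ and $w+1$. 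What $w\in[0,n'-2]$ actually guarantees is only that these positions stay within $[1,n'm]$ (no wrap-around past the last digit), not that they lie in one block.

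The consequence is that the bases attached to the gap positions are the cyclic rotation $p_y,p_{y+1},\dots,p_m,p_1,\dots,p_{y-1}$ rather than $p_1,\dots,p_m$, so expressing an arbitrary $a$ supported on the gap as $s\,d_{w,y}$ for some $s\in[0,\bar r-1]$ requires applying Theorem~\ref{thm9} with the primes in a permuted order. This is exactly what the paper does: it writes $a=d_{w,y}\sum_{y'=0}^{m-1}a_{wm+y+y'}\bigl(\tfrac{1}{p_{y-1}}\prod_{e=y-1}^{y+y'-1}p_{\bar e}\bigr)$ and appeals to the remark that Theorem~\ref{thm9} is independent of the prime ordering. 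Without that permutation step your argument only covers $y=1$; for general $y$ the bijection claim is unproved. The final appeal to Proposition~\ref{pro7} and the order computation for $\zeta^u$ are fine.
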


\begin{proposition}\label{pro12}
If $w=n'-1$, $\mathrm{rank}_{B}(\{g_{t,s}(\alpha_{w,y,j}):t\in T_{w,y},s\in [0,\bar r-1]\})=l$ for any $y\in[m]$ and $j\in[u]$, where $T_{n'-1,y}$ is defined as in Construction \ref{cons2}.
\end{proposition}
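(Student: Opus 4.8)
\emph{Plan of proof.} The plan is to mimic the verification of Construction~\ref{cons1} and the proof of Proposition~\ref{pro11}: express $g_{t,s}$ at the failed node as a power of a fixed element of $F$, reduce the rank statement to a claim about the resulting exponent set, and close with the argument behind Proposition~\ref{pro7}. Fix a failed node $\alpha_{n'-1,y,j}=\zeta^{d_{n'-1,y}}\alpha^{j}$ in rack $(n'-1,y)$. Since $o(\alpha)=u$, we get $g_{t,s}(\alpha_{n'-1,y,j})=\zeta^{ut}\zeta^{u s d_{n'-1,y}}\alpha^{jus}=(\zeta^{u})^{\,t+s d_{n'-1,y}}$, so it suffices to prove that the exponent set
\[
E_{y}:=\{\,t+s\,d_{n'-1,y}\;:\;t\in T_{n'-1,y},\ s\in[0,\bar r-1]\,\}
\]
satisfies $\mathrm{rank}_{B}\bigl(\{(\zeta^{u})^{e}:e\in E_{y}\}\bigr)=l$. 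If $y=1$, then $d_{0,1}=1$, $T_{n'-1,1}=\{t:t_{(n'-1)m+1}=\cdots=t_{n'm}=0\}$, and $s\,d_{n'-1,1}=s\,\bar r^{\,n'-1}$ fills the last block, so $E_{1}=[0,l-1]$ and we are done exactly as in Proposition~\ref{pro11}. The new phenomenon occurs for $y\ge 2$, where $s\,d_{n'-1,y}$ may exceed $l$ and the addition must ``wrap around'' modulo $l$.

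For $y\ge 2$ the first step is to expand $s$ in the multi-base system attached to the cyclically shifted list of primes $(p_{y},p_{y+1},\dots,p_{m},p_{1},\dots,p_{y-1})$; this is legitimate because Theorem~\ref{thm9} is insensitive to the order of the primes. Multiplying such an expansion by $d_{n'-1,y}$ and repeatedly using $d_{w,y'}\,p_{y'}=d_{w,y'+1}$ for $y'<m$, $d_{w,m}\,p_{m}=d_{w+1,1}$, and $d_{n'-1,m}\,p_{m}=\bar r^{\,n'}=l$, I expect to obtain a decomposition
\[
s\,d_{n'-1,y}=A+lC ,
\]
where $A$ is a multiple of $d_{n'-1,y}$ lying in $[0,l-1]$ whose multi-base digits sit on positions $(n'-1)m+y,\dots,n'm$, while $C\in[0,d_{0,y}-1]$; since a multi-base expansion is a bijection onto its digit tuples, splitting those tuples shows $s\mapsto(A,C)$ is a bijection onto (the multiples of $d_{n'-1,y}$ in $[0,l-1]$)$\,\times\,[0,d_{0,y}-1]$. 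By definition $T_{n'-1,y}$ is exactly the set of $t\in[0,l-1]$ whose multi-base digits vanish on positions $(n'-1)m+y,\dots,n'm$ and on positions $1,\dots,y-1$; hence $t$ and $A$ have disjoint digit supports inside $[0,l-1]$, so $t+A\in[0,l-1]$, $t+A$ is a multiple of $d_{0,y}$ (note $d_{0,y}\mid\bar r\mid l$), and $(t,A)\mapsto t+A$ is a bijection onto $\{d_{0,y}\,b:b\in[0,l/d_{0,y}-1]\}$. Writing $t+A=d_{0,y}\,b$, the exponent equals $d_{0,y}\,b+lC=d_{0,y}\bigl(b+C\cdot l/d_{0,y}\bigr)$, and $b+C\cdot l/d_{0,y}$ runs bijectively over $[0,l-1]$ as $(b,C)$ runs over $[0,l/d_{0,y}-1]\times[0,d_{0,y}-1]$. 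Thus $E_{y}=\{\,d_{0,y}\,j : j\in[0,l-1]\,\}$.

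It then follows that $\{(\zeta^{u})^{e}:e\in E_{y}\}=\{(\zeta^{\,u d_{0,y}})^{j}:j\in[0,l-1]\}$, a set of exactly $l$ elements, and to finish I would rerun the argument of Proposition~\ref{pro7} with $\zeta^{\,u d_{0,y}}$ in place of $\zeta^{u}$. If $B(\zeta^{\,u d_{0,y}})=\mathbb{F}_{q^{t}}$ with $t\mid l$ and $t<l$, then $t\le l/2$ and $(q^{l}-1)\mid u d_{0,y}(q^{t}-1)$ forces $u d_{0,y}\ge(q^{l}-1)/(q^{t}-1)\ge q^{\,l-t}\ge q^{\,l/2}$; but $u d_{0,y}\le(q-1)\bar r$, while $l=\bar r^{\,n'}\ge\bar r^{\,2}$ (recall $n'\ge 2$), so $q^{\,l/2}\ge q^{\,\bar r^{2}/2}>(q-1)\bar r$, a contradiction. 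Hence $t=l$, so $\{1,\zeta^{\,u d_{0,y}},\dots,(\zeta^{\,u d_{0,y}})^{l-1}\}$ is a $B$-basis of $F$ and $\mathrm{rank}_{B}\bigl(\{(\zeta^{u})^{e}:e\in E_{y}\}\bigr)=l$, as required. The case $m\nmid\bar n$ goes through unchanged with $n'$ replaced by $\lfloor\bar n/m\rfloor$ and the last, truncated block of length $h=\bar n\bmod m$ in the role of block $n'-1$.

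The hard part will be the carry bookkeeping in the middle paragraph: keeping track of what happens when $s\,d_{n'-1,y}$ overflows past $d_{n'-1,m}$ and re-enters the multi-base expansion at the low-order end modulo $l$. The cyclic relabeling of $p_{1},\dots,p_{m}$ is what aligns the digits of $s$ with the place values $d_{n'-1,y},d_{n'-1,y+1},\dots$, making the splitting $s\,d_{n'-1,y}=A+lC$ transparent; the real work is to verify the two bijections ($s\leftrightarrow(A,C)$ and $(t,A)\leftrightarrow t+A$) rather than merely to match cardinalities.
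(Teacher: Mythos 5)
Your proof is correct and takes essentially the same approach as the paper's: both arrive at the key identity $E_y=\{t+s\,d_{n'-1,y}:t\in T_{n'-1,y},\ s\in[0,\bar r-1]\}=d_{0,y}\cdot[0,l-1]$ by observing that $t$ and $s\,d_{n'-1,y}$ occupy disjoint digit ranges in the (cyclically extended) multi-base expansion, and both then conclude by showing $\zeta^{u\,d_{0,y}}$ has degree $l$ over $B$ via the divisibility $(q^l-1)\mid u\,d_{0,y}(q^t-1)$; your $A+lC$ splitting is a slightly more explicit bookkeeping of the same overflow phenomenon the paper handles by letting its digit indices run past $n'm$, and your closing bound ($t\le l/2$ so $u\,d_{0,y}\ge q^{l/2}$) is a coarser but equally valid substitute for the paper's $t\le l/p_1$ estimate.
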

Subsequently, we determine the repair bandwidth. 
\begin{theorem}\label{thm6}
For $w\in[0,n'-1], y\in[m],j\in[u]$, the repair bandwidth of the linear repair scheme for node $\alpha_{w,y,j}$ of RS$(A,k)$ defined in Construction \ref{cons1} satisfies:
\begin{itemize}
	\item[{\rm(i)}] If $w\in[n'-3]$, \begin{equation}\notag
		b<\frac{(\bar n-1)+3\bar r+3m+1}{\bar r}l\;.
	\end{equation}
	\item[{\rm(ii)}] If $w=n'-2$, \begin{equation}\notag
		b<\frac{(\bar n-1)+(m+3)\bar r+4}{\bar r}l\;.
	\end{equation}
	\item[{\rm(iii)}] If $w=n'-1$, $$b<\frac{(\bar n-1)+(m+1)\bar r+2}{\bar r}l\;.$$ 
\end{itemize}
Then for a fixed $\bar r$, as $\bar n \rightarrow \infty$, $\frac{b}{b_{\min}} \rightarrow 1$, where $b_{\min} = \frac{(\bar n-1)l}{\bar r}$.
\end{theorem}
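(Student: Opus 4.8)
The plan is to adapt the counting argument used in the proof of Theorem \ref{thm8} to the multi-base setting, handling the three cases (i)--(iii) according to how the index $w$ of the host rack $(w,y)$ sits relative to the boundary block $w=n'-1$. Fix a failed node $\alpha_{w,y,j}$ in rack $(w,y)$; by Proposition \ref{pro11} (if $w\le n'-2$) or Proposition \ref{pro12} (if $w=n'-1$), the polynomials $\{g_{t,s}(x):t\in T_{w,y},s\in[0,\bar r-1]\}$ form a valid repair scheme, so by \eqref{bformula} the bandwidth is $b=\sum_{(e,z)\ne(w,y)}b_{e,z}$ with $b_{e,z}=\mathrm{rank}_B\{g_{t,s}(\alpha_{e,z,j'})\}$. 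Using $g_{t,s}(\alpha_{e,z,j'})=(\zeta^u)^{t+s\,d_{e,z}}$ (the $\alpha^{j'us}$ factor vanishes since $o(\alpha)=u$), each $b_{e,z}$ equals the $B$-rank of the exponent set $\{t+s\,d_{e,z}:t\in T_{w,y},s\in[0,\bar r-1]\}$, which by Proposition \ref{pro7} is just the cardinality of this set modulo the cyclic structure, i.e. bounded by the number of distinct multi-base tuples it produces.

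The core of the argument is the same bucketing as in Theorem \ref{thm8}: the ``base'' contribution $\{\zeta^{ut}:t\in T_{w,y}\}$ already has rank $l/\bar r$ (since $T_{w,y}$ zeroes out an entire length-$m$ window of digits, multiplying the count by $\prod_{y'}p_{y'}=\bar r$), and each of the $\bar r-1$ nonzero shift values $s$ adds only the tuples whose trailing window-digits are forced, contributing at most $l/\bar r^{\text{(distance)}}$ extra. First I would carry out the case $w=n'-1$ (case (iii)): here the window wraps cyclically, the shift by $s\,d_{n'-1,z}$ interacts with the zeroed positions through the wraparound $\overline{(n'-1)m+y+y'}$, and one gets for each helper rack $(e,z)$ a bound $b_{e,z}\le l/\bar r + (\text{carry terms})$ where the carry terms telescope over a geometric series in $1/p$; summing over the $\bar n-1$ helper racks yields $b<\frac{(\bar n-1)l}{\bar r}+(m+1)l+\frac{2l}{\bar r}$, which is the stated bound. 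Then cases (i) and (ii) follow the same template but with the extra bookkeeping that the host block $w$ is interior, so the window does not wrap and the shift for helper racks lying in blocks $w-1,w,w+1$ (and, when $w=n'-2$, also the wraparound block $n'-1$) must be treated by hand, producing the additive $3\bar r$, $3m$ (resp.\ $(m+3)\bar r$) overhead terms.

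The main obstacle is the precise accounting of the ``boundary'' helper racks --- those whose block index $e$ is adjacent to $w$ or equal to $n'-1$ --- where the shift $s\,d_{e,z}$ can collide with the zeroed window of $T_{w,y}$ either directly or after a carry/wraparound, so that $b_{e,z}$ may be as large as a constant multiple of $l/\bar r$ rather than a vanishing fraction of it. For these $O(m)$ racks one cannot use the geometric decay and must crudely bound $b_{e,z}\le \bar r \cdot \frac{l}{\bar r}=l$ or similar; correctly identifying exactly which racks fall in this ``bad'' set in each of the three cases, and verifying the count of bad racks is $O(m+\bar r)$ uniformly, is the delicate step. Once that is done, the remaining $\bar n-O(m+\bar r)$ ``good'' helper racks each contribute $\frac{l}{\bar r}$ plus a summable tail, the sum of all tails is $O(l)$, and collecting terms gives the three displayed inequalities; the final limit $b/b_{\min}\to1$ as $\bar n\to\infty$ is then immediate since the numerator is $(\bar n-1)l+O((m+\bar r)l)$ with $m,\bar r$ fixed.
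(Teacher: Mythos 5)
Your outline takes exactly the same route as the paper: appeal to Propositions \ref{pro11} and \ref{pro12} for validity, write each $b_{e,z}$ as the $B$-rank of the exponent set $\{t+s\,d_{e,z}\}$, decompose the helper racks into a small ``bad'' boundary set bounded crudely and a ``good'' remainder with geometrically decaying tails, and then sum. The identification of the bad set (the $O(m)$ racks whose block is near or wraps past $n'-1$), the $b_{e,z}\le l$ crude bound there, and the $l/\bar r + \text{(tail)}$ bound elsewhere are precisely what the paper does for case (iii), and the claimed totals match the stated inequalities.

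What remains missing is the actual content of the proof, and you flag it yourself: the ``delicate step'' of identifying which racks are bad and bounding each $b_{e,z}$ by exhibiting a covering index set. The paper does this explicitly by partitioning $w'm+z$ into ranges ($[y,(n'-1)m+y-m]$, $[1,y-1]$, two pieces of $[(n'-1)m+y-m+1,n'm]$, and the tail $[(n'-1)m+y+1,n'm]$) and in each range constructing concrete digit-constrained sets $S_{w',z}$, $T_1\cup T_2$, $\tilde T_1\cup\tilde T_2$, $\tilde T_1'\cup\tilde T_2'$ that contain the shifted exponents, with cardinalities $|T_{n'-1,y}|+|S_{w',z}|\le \frac{l}{\bar r}+\frac{l}{\bar r^{\rho}\prod p_j}$ etc. Without constructing these sets you cannot justify the telescoping-tail claim or pin down the exact overhead constants $3\bar r+3m+1$, $(m+3)\bar r+4$, $(m+1)\bar r+2$; your phrase ``bounded by the number of distinct multi-base tuples it produces'' also glosses over the fact that the exponents $t+s\,d_{e,z}$ may exceed $l-1$, which the paper resolves by the explicit containment $V_{w',z}\subseteq\{(\zeta^u)^{t'}:t'\in\cdots\}$. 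So the approach is correct and identical, but the rank-bounding casework --- the heart of the proof --- is only asserted, not carried out.
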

\begin{proof}
Due to the limited space, we only prove the case of $w=n'-1$ and the proofs of case (i) and case (ii) can be analogized to the proof of case (iii). Let $T_{n'-1,y}$ and $g_{t,s}(x),t\in T_{n'-1,y},s\in[0,\bar r-1]$ be defined as in Construction \ref{cons2}. It follows that $|T_{n'-1,y}|=\frac{l}{\bar r}$. Then the repair bandwidth is $b=\sum_{(w',z)\neq(n'-1,y)}b_{w',z},$ where
\begin{align*}
	b_{w',z}&=\mathrm{rank}_B\{g_{t,s}(\alpha_{w',z,j}):t\in T_{n'-1,y},s\in[0,\bar r-1]\}\\
	        &=\mathrm{rank}_B\{(\zeta^{u})^{t+sd_{w',z}}:t\in T_{n'-1,y},s\in[0,\bar r-1]\}.
\end{align*}
Denote $V_{w',z}=\{(\zeta^{u})^{t+sd_{w',z}}:t\in T_{n'-1,y},s\in[0,\bar r-1]\}$. We can establish upper bounds on $b_{w',z}$ in three cases according to $w'm+z$.
\begin{enumerate}
\item[1)] $w'm+z\in[y,(n'-1)m+y-m]$. 
Define \begin{align*}S_{w',z}&=\{t\in[0,l-1]:t_1=\cdots=t_{y-1}=0,\\&t_{w'm+z+m}=\cdots=t_{(n'-1)m+y-1}=0,\\&t_{(n'-1)m+y}=1,t_{(n'-1)m+y+1}=\cdots=t_{n'm}=0\}.\end{align*} Then, $|S_{w',z}|\!\!\leq \!\!\!\frac{l}{\bar{r}^{\rho}\prod_{j=0}^{\mu}p_{j}},$ where $\rho\!\!=\!\!\!\lfloor{\frac{(n'-1)m+y-w'm-z}{m}}\rfloor$ and $\mu=((n'-1)m+y-w'm-z)\mod m$.
It can be checked that $V_{w',z}\subseteq \{(\zeta^u)^{t'}:t'\in T_{n'-1,y}\cup S_{w',z}\}$ and $|S_{w',z}|\leq \frac{l}{\bar{r}^{\rho}\prod_{j=0}^{\mu}p_{j}}$, where $\rho=\lfloor{\frac{(n'-1)m+y-w'm-z}{m}}\rfloor$ and $\mu=((n'-1)m+y-w'm-z)\mod m$. Thus, 
\begin{equation}\label{b1}
	b_{w',z}\leq |T_{n'-1,y}|+|S_{w',z}|\leq \frac{l}{\bar r}+\frac{l}{\bar{r}^{\rho}\prod_{j=0}^{\mu}p_{j}}.
\end{equation}

\item[2)]$w'm+z\in[1,y-1]$, i.e., $w=0$,  if $y-1<1$, we stipulate that $[1,y-1]$ is an empty set. Define \begin{align*}T_1=\{&t\in[0,l-1]:t_1=\cdots=t_{z-1}=0,\\&t_{(n'-1)m+y}=\cdots=t_{n'm}=0\},\end{align*}
	\begin{align*}	T_2 =\{&t\in[0,l-1]:t_1=\cdots=t_{z-1}=0,\\&t_{z+m}=\cdots=t_{(n'-1)m+y-1}=0,\\&t_{(n'-1)m+y}=1,t_{(n'-1)m+y+1}=\cdots=t_{n'm}=0\}.\end{align*}
Then, $|T_1\cup T_2|\leq \frac{l}{\prod_{j=0}^{m+z-y}p_{j}}+\frac{l}{\bar r^{n'-1}}$. It can be checked that $V_{w',z}\subseteq \{(\zeta^u)^{t'}:t'\in T_{n'-1,y}\cup T_1\cup T_2\}$. Thus, 
\end{enumerate}
\begin{equation}\label{b2}
b_{w',z}\leq |T_{n'-1,y}|+|T_1\cup T_2|\leq \frac{l}{\bar r}+\frac{l}{\prod_{j=0}^{m+z-y}p_{j}}+\frac{l}{\bar r^{n'-1}}.
\end{equation}

\begin{enumerate}
\item[3)]$w'm+z\in [(n'-1)m+y-m+1,n'm]\setminus\{(n'-1)m+y\}$.
	\begin{enumerate}
		\item[3.1)] $w'm+z\in [(n'-1)m+y-m+1,(n'-1)m]$. 
		
Define 
\begin{align*}
	\tilde{T_1}=\{&t\in[0,l-1]:t_{1}=\cdots=t_{y-1}=0,\\&t_{w'm+z+m}=\cdots=t_{n'm}=0\},
\end{align*}
\begin{align*}
\tilde{T_2}=\{&t\in[0,l-1]:t_{1}=\cdots=t_{y-1}=0,\\ &t_{(n'-1)m+y}=\cdots=t_{w'm+z+m-1}=0,\\&t_{w'm+z+m}=1,t_{w'm+z+m+1}=\cdots=t_{n'm}=0\}.
\end{align*}
Then, $V_{w',z}\subseteq\{(\zeta^u)^{t'}:t'\in T_{w',y}\cup \tilde{T_1}\cup\tilde{ T_2}\} $ and 
\end{enumerate}
\end{enumerate}
\begin{equation}\label{b31}
b_{w',z}\leq |T_{w',y}|+|\tilde{T_1}\cup\tilde{ T_2}|\leq\frac{2l}{\bar r}+\frac{l}{\prod_{j=0}^{(n'-w'-1)m+y-z}p_{j}}.	
\end{equation}

\begin{enumerate}
\item[3.2)] $w'm+z\in [(n'-1)m+1,(n'-1)m+y-1]$, i.e., $w'=n'-1$. 
Define  
\begin{align*}
	&\tilde{T_1}'=\{t\in[0,\big(\prod_{j=0}^{z-1}p_jl\big)-1]:t_{1}=\cdots=t_{y-1}=0\},\\
	&\tilde{T_2}'=\{t\in[0,\big(\prod_{j=0}^{z}p_jl\big)-1]:t_{1}=\cdots=t_{y-1}=0,\\&t_{(n'-1)m+y}=\cdots=t_{w'm+z+m-1}=0,t_{w'm+z+m}=1\}.
\end{align*}
Then, $V_{w',z}\subseteq\{(\zeta^u)^{t'}:t'\in T_{w',y}\cup \tilde{T_1}'\cup\tilde{T_2}'\} $ and 
\begin{equation}\label{b32}
b_{w',z}\leq |T_{w',y}|+|\tilde{T_1}'\cup\tilde{T_2}'|\leq\frac{2l}{\bar r}+\frac{l}{\prod_{j=0}^{y-z}p_{j}}.
\end{equation}

\item[3.3)]$w'm+z\in[(n'-1)m+y+1,n'm]$. 
One can derive that \begin{equation}\label{b33}
	b_{w',z}\leq l.\end{equation}
\end{enumerate}

Combining with \eqref{b1}, \eqref{b2}, \eqref{b31}, \eqref{b32} and \eqref{b33}, the total RB for $w=n'-1$ is
$$b=\sum_{(w',z)\neq(n'-1,y)}b_{w',z}<\frac{(\bar n-1)+(m+1)\bar r+2}{\bar r}l. $$	
For a fixed $\bar r$, it follows  $\frac{b}{b_{\min}}\rightarrow 1$ as $\bar n\rightarrow\infty$, where $b_{\min}=\frac{(\bar{n}-1)l}{{\bar{r}}}$. Hence our repair scheme is asymptotically optimal.   
\end{proof}

{Consider the $[n,k']$ RS code $RS(A,k')$ over $F$, where $k'>k$. Actually, the repair polynomials $\{g_{t,s}(x):t\in T_{w,y},s\in[0,\bar r-1]\}$ given in Construction \ref{cons2} for $RS(A,k')$ can also be used to repair failed node $\alpha_{w,y,j}$ of $RS(A,k)$. This is because ${\rm deg}(g_{t,s})\leq n-k'-1<n-k-1$ and $\mathrm{rank}_{B}(\{g_{t,s}(\alpha_{w,y,j}):t\in T_{w,y},s\in [0,\bar r-1]\})=l$ hold. 

In fact, the restriction that $\bar{r}$ is a composite number in Construction \ref{cons2} is not necessary.
Suppose $RS(A,k)$ over $F$ is defined as in Construction \ref{cons2}. If $\bar{r}\geq5$ is a prime number, then $\bar{r}'=\bar r-1$  is a  composite number. Let $\bar{k}'=\bar{n}-\bar{r}'$ and $k'=\bar{k}'u+v=(\bar{n}-\bar{r}')u+v>(\bar{n}-\bar{r})u+v=k$. Then the repair scheme for $RS(A,k')$ given in Construction \ref{cons2} is also suitable for $RS(A,k)$. Moreover, using the similar method as in Theorem \ref{thm6}, the RB can be estimated as follows. 

{\begin{corollary}\label{cor7}
If $\bar{r}\geq5$ is a prime number, i.e., $m=1$, then the RB of the repair scheme for $RS({A},k)$ given in Construction \ref{cons2} satisfies $\frac{b'}{b_{\mathrm{min}}}\rightarrow \frac{\bar{r}}{\bar{r}-1}\leq\frac{5}{4}$ as $\bar{n}\rightarrow \infty$.
\end{corollary}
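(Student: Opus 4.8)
The plan is to reduce Corollary~\ref{cor7} to Theorem~\ref{thm6} by a dimension-lifting argument. Since $\bar r\ge 5$ is prime, $\bar r'=\bar r-1$ is composite, so we may write $\bar r'=\prod_{i=1}^{m'}p_i$ with $m'\ge 2$ (if $\bar r=5$ then $\bar r'=4=2\cdot 2$; in general $m'\ge 2$ because $\bar r'\ge 4$). Set $\bar k'=\bar n-\bar r'$, $k'=\bar k'u+v$, and note $k'>k$ as computed in the paragraph preceding the corollary. First I would invoke the observation already made there: because $\deg(g_{t,s})\le u\bar r'-u\le n-k'-1<n-k-1$, every polynomial in the repair family for $RS(A,k')$ also lies in the dual of $RS(A,k)$ via the corresponding dual codeword, and the rank condition $\mathrm{rank}_B(\{g_{t,s}(\alpha_{w,y,j})\})=l$ is unchanged (it depends only on the evaluation points, via Propositions~\ref{pro11} and~\ref{pro12}). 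Hence the scheme of Construction~\ref{cons2} applied with parameter $\bar r'$ is a valid linear repair scheme for $RS(A,k)$, with sub-packetization $l=(\bar r')^{n'}$ (times the $\prod p_j$ correction in the $m'\nmid\bar n$ case).

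Next I would compute the repair bandwidth $b'$ of this scheme directly from Theorem~\ref{thm6}, but with $\bar r$ there replaced by $\bar r'=\bar r-1$ and $m$ replaced by $m'$. The three cases of Theorem~\ref{thm6} give, for the dominant case $w=n'-1$,
\begin{equation}\notag
b'<\frac{(\bar n-1)+(m'+1)\bar r'+2}{\bar r'}\,l=\frac{(\bar n-1)+(m'+1)(\bar r-1)+2}{\bar r-1}\,l,
\end{equation}
and similarly for the other two cases with the analogous constants. The point is that the numerator is $(\bar n-1)+O_{\bar r}(1)$: for fixed $\bar r$ the quantities $m'$, $\bar r'$ are constants, so the additive term is bounded independently of $\bar n$. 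Therefore
\begin{equation}\notag
\frac{b'}{b_{\min}}=\frac{b'}{(\bar n-1)l/\bar r}<\frac{\bar r}{\bar r-1}\cdot\frac{(\bar n-1)+O_{\bar r}(1)}{\bar n-1}\longrightarrow\frac{\bar r}{\bar r-1}
\end{equation}
as $\bar n\to\infty$. Finally, since $x\mapsto x/(x-1)$ is decreasing on $(1,\infty)$ and $\bar r\ge 5$, we get $\frac{\bar r}{\bar r-1}\le\frac 54$, which is the claimed bound.

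The only genuine subtlety — and the step I would check most carefully — is that Theorem~\ref{thm6} as stated assumes $m\ge 2$ and is phrased for $RS(A,k)$ with $\bar r$ composite; here I am applying it with $\bar r'$ in the role of $\bar r$ and $k'$ in the role of $k$, which is legitimate precisely because $\bar r'=\bar r-1\ge 4$ is composite and $RS(A,k')$ is an honest instance of Construction~\ref{cons2}. I would also note that the evaluation-point set $A$ for $RS(A,k')$ coincides (up to relabeling of rack indices via the multi-base expansion with respect to $\bar r'$) with a valid choice, so no re-selection of $A$ is needed beyond what Construction~\ref{cons2} already prescribes; the $m'\nmid\bar n$ case is handled by the Remark following Construction~\ref{cons2} exactly as in the proof of Theorem~\ref{thm6}. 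No new combinatorial estimate is required — the corollary is a corollary in the strict sense, obtained by substituting $\bar r\mapsto\bar r-1$ into the bandwidth bound and passing to the limit.
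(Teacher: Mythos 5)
Your proposal is correct and follows essentially the same route as the paper: replace $\bar r$ by the composite $\bar r'=\bar r-1$ (equivalently $k$ by $k'$), verify the degree and rank conditions so the Construction~\ref{cons2} scheme for $RS(A,k')$ repairs $RS(A,k)$, invoke Theorem~\ref{thm6} with $\bar r'$ in place of $\bar r$, and rescale by $b_{\min}$ to pick up the factor $\bar r/\bar r'$. Your version is merely a bit more explicit about the $O_{\bar r}(1)$ bookkeeping and about why $\bar r\ge 5$ gives the final bound $5/4$.
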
}
\begin{proof}
Denote $\bar{r}'=\bar{r}-1$. Since $\bar{r}\geq 5$ is a prime number, we know $\bar{r}'$ is a composite number. According to our previous analysis and Theorem \ref{thm6}, the repair scheme for $RS(A,k')$ is also suitable for $RS({A},k)$ and  the RB of  $RS({A},k)$ satisfies $\frac{b}{b'}\rightarrow 1$ as $\bar{n}\rightarrow \infty$, where $b'=\frac{(\bar{n}-1)l}{\bar{r}'}$. Since $b_{\mathrm{min}}=\frac{(\bar{n}-1)l}{\bar{r}}$, it follows that $\frac{b'}{b_{\mathrm{min}}}\rightarrow \frac{\bar{r}}{\bar{r}-1}$ as $\bar{n}\rightarrow \infty$.
\end{proof}

\section{CONCLUSION}\label{Sec4}
We present rack-aware RS codes with asymptotically optimal RB and smaller sub-packetization for all admissible parameters in this paper. In the future work, we expect to further reduce the sub-packetization level with optimal RB.

\appendix
\begin{proof}[Proof of Proposition \ref{pro11}]
Since $\alpha_{w,y,j}=\alpha^j\zeta^{d_{w,y}}$ and $o(\alpha)=u$, we have $g_{t,s}(\alpha_{w,y,j})=\zeta^{ut}\zeta^{usd_{w,y}}=(\zeta^{u})^{t+sd_{w,y}}.$ According to Proposition \ref{pro7}, ${\rm rank}_B\big(\{1,\zeta^u,\zeta^{2u},\dots,\zeta^{(l-1)u}\}\big)=l$ because $u\mid q-1$. Therefore, it suffices to prove $t+sd_{w,y}$ for $t\in T_{w,y},s\in[0,\bar r-1]$ can range over $[0,l-1]$. 

Let
\begin{align*}
T'=\{&a\in[0,l-1]:\\
&a=(0,\dots,0,a_{wm+y},\dots,a_{wm+y+m-1},0,\dots,0)\}.
\end{align*}
It is evident that $\bigcup_{t\in T_{w,y}} t+T'=[0,l-1]$. We next prove it actually has $T'=\{sd_{w,y}:s\in[0,\bar r-1]\}$.
For $e\in\mathbb{Z}$, define $$\bar{e}= \begin{cases}
e & \text{if}~ e\leq m \\
e \mod m & \text{otherwise}
			\end{cases}.$$
For any $a\in T'$, we have $$a=d_{w,y}\sum_{y'=0}^{m-1}a_{wm+y+y'}\big(\frac{1}{p_{y-1}}\prod_{e=y-1}^{y+y'-1}{p_{\bar{e}}}\big),$$ where $a_{wm+y+y'}\in [0,p_{\overline{y+y'}}-1]$. According to Theorem \ref{thm9}, $\sum_{y'=0}^{m-1}a_{wm+y+y'}(\frac{1}{p_{y-1}}\prod_{e=y-1}^{y+y'-1}{p_{\bar{e}}})$ can represent any integer in $[0,\bar r-1]$. Therefore, $T'=\{sd_{w,y}:s\in[0,\bar r-1]\}$. The proof is completed.   
\end{proof}

\begin{proof}[Proof of Proposition \ref{pro12}]
 Let $T'=\{sd_{n'-1,y}:s\in[0,\bar r-1]\}$. By the definition of $g_{t,s}(x)$, we need to prove $\mathrm{rank}_B\{(\zeta^{u})^{t+a}:t\in T_{n'-1,y},a\in T'\}=l$. For $ a\in T'$, following the discussion in Proposition \ref{pro11}, the multi-base expansion of $a$ is
$$a=(0,\dots,0,a_{(n'-1)m+y},\dots,a_{(n'-1)m+y+m-1}),$$ where $a_{(n'-1)m+y+y'}\in[0,p_{\overline{y+y'}}-1],y'\in[0,m-1]$. For any $t\in T_{n'-1,y}$, the muti-base expansion of $t$ is $$t=(0,\dots,0,t_{y},t_{y+1},\dots,t_{(n'-1)m+y-1},0,\dots,0).$$
 Therefore, $a+t,a\in T',t\in T_{n'-1,y}$ can represent any integer in $\prod_{j=0}^{y-1}p_j\cdot[0,l-1]$. Consequently, $\{(\zeta^{u})^{t+a}:t\in T_{n'-1,y},a\in T'\}=\{(\zeta^{u\prod_{j=0}^{y-1}p_j})^v:v\in[0,l-1]\}$. It remains to prove  the degree of the minimal polynomial of $\zeta^{u\prod_{j=0}^{y-1}p_j}:=\gamma$ over $B$ is $l$. 
 
Without loss of generality, we assume $p_1\leq p_2\leq\cdots\leq p_m$. Suppose the degree of the minimal polynomial of $\gamma$ is $t\ne l$, then $t\mid l$ and $\gamma^{q^t-1}=1$, i.e., $(q^l-1)\mid u\prod_{j=0}^{y-1}p_j(q^t-1)$. Therefore, $u\prod_{j=0}^{y-1}p_j(q^t-1)\geq q^l-1$, i.e., $u\prod_{j=0}^{y-1}p_j\geq \frac{q^l-1}{q^t-1}.$ However, since $y\in[m]$ and $p_1$ is the smallest divisor of $l$, we know $u\prod_{j=0}^{y-1}p_j\leq u\frac{\bar r}{p_1}$ and $t\leq \frac{l}{p_1}$. Thus, \begin{equation*}\label{eq10}
 	\big(u\prod_{j=0}^{y-1}p_j\big)-\frac{q^l-1}{q^t-1}\leq u\frac{\bar r}{p_1}-\frac{q^l-1}{q^{\frac{l}{p_1}}-1}\leq (q-1)\frac{\bar r}{p_1}-\sum_{i=0}^{p_1-1}q^{i\frac{l}{p_1}}.
 \end{equation*} To rule out the trivial cases, we may assume $\bar r\geq 2$, i.e., $p_1\geq 2$. Then $\sum_{i=0}^{p_1-1}q^{i\frac{l}{p_1}}\geq 1+q^{{\frac{l}{p_1}}}\geq 1+q^{{\frac{\bar r}{p_1}}}$ because $l=\bar{r}^{n'}$. Moreover, since $(q-1)x< 1+q^x$ for any integer $x$, it follows that $(q-1)\frac{\bar r}{p_1}-\sum_{i=0}^{p_1-1}q^{i\frac{l}{p_1}}\leq (q-1)\frac{\bar r}{p_1}-(1+q^{{\frac{\bar r}{p_1}}})<0$ which contradicts $u\prod_{j=0}^{y-1}p_j\geq \frac{q^l-1}{q^t-1}.$ Then the proof is finished.   
\end{proof}

\end{document}